\def \qed {\hfill \vrule height7pt width 7pt depth 0pt}
\newtheorem{theorem}{Theorem}
\newtheorem{corollary}{Corollary}
\newtheorem{lemma}{Lemma}
\newtheorem{example}{Example}
\newtheorem{definition}{Definition}
\begin{document}


\title{ Special unextendible  entangled bases with continuous  integer cardinality}

\date{\today}

\author{Yan-Ling Wang}%
\email{wangylmath@yahoo.com}
\affiliation{
	School of Computer Science and Network Security, Dongguan University of Technology, Dongguan 523808, China
}

\begin{abstract}
  Special unextendible  entangled basis of ``type $k$"  (SUEBk),  a set of  incomplete
 orthonormal special  entangled states of ``type $k$"   whose complementary space has no special  entangled state  of ``type $k$".  This concept can be seem as a generalization of the unextendible product basis (UPB) introduced by Bennett et al. in  [\href{https://doi.org/10.1103/PhysRevLett.82.5385}{Phys. Rev. Lett. \textbf{82}, 5385(1999)}] and the unextendible maximally
 entangled basis (UMEB)  introduced by  Bravyi and Smolin in [\href{https://journals.aps.org/pra/abstract/10.1103/PhysRevA.84.042306}{Phys. Rev. A  \textbf{84}, 042306(2011)}].   We present an efficient method to construct  sets   of  SUEBk.   The main strategy here is to decompose the whole space into two subspaces such that the rank of one subspace  can be easily upper bounded  by $k$ while  the other one can be generated by two kinds of the special entangled states of type $k$.   This method is very effective for those $k=p^m\geq 3$ where $p$ is a prime number. For these cases, we can otain sets of    SUEBk  with continuous  integer cardinality when the local dimensions are large. Moreover, one can find that our method here can be easily  extended  when there are more than two kinds of the special entangled states of type $k$ at hand. 
	\begin{description}
\item[PACS numbers] 03.67.Hk,03.65.Ud
\end{description}
 \end{abstract}                            
\maketitle

\section{Introduction}


Unextendible product basis (UPB)\cite{Bennett99,DiVincenzo03},  a set of incomplete
orthonormal product states whose complementary space has no product states, has been shown to be useful for constructing bound entangled states and displaying quantum nonlocality without entanglement \cite{Ben99,Ran04,Hor03}.

As anology of the UPB,  Bravyi and Smolin introduced the unextendible maximally entangled basis (UMEB) \cite{Bravyi11}, a set of orthonormal maximally entangled states in
$\mathbb{C}^{d}\bigotimes\mathbb{C}^{d}$ consisting of fewer than $d^2$ vectors which have no additional maximally
entangled vectors orthogonal to all of them.  There they proved that no UMEB exists in two qubits system and presented examples of   UMEBs in $\mathbb{C}^{3}\bigotimes\mathbb{C}^{3}$ and $\mathbb{C}^{4}\bigotimes\mathbb{C}^{4}$. Since then, the UMEB was   further studied by  several reseachers \cite{Chen13, Li14,Wang14,Nan15,Wang17,Guo15,Guo16,Zhang18}. Lots  of the works pay attention to  the UMEBs for general quantum systems  $\mathbb{C}^{d}\bigotimes\mathbb{C}^{d'}$.
The cardinality of the constructed UMEBs are always multiple of $d$ or $d'$.
 
Guo \emph{et al.} extended these two concepts to the states with fixed Schmidt numbers and studied the complete basis \cite{Guo152} and the unextendible ones \cite{Guo14}. There they introduced the notion of  special entangled states of type $k$ (SESk): an entangled state whose nonzero Schmidt coefficients are all equal to $1/\sqrt{k}$. Then a special unextendible entangled basis of type $k$ (SUEBk)is  a set of orthonormal SESk in
$\mathbb{C}^{d}\bigotimes\mathbb{C}^{d'}$ consisting of fewer than $dd'$ vectors which have no additional SESk orthogonal to all of them.  Quite rencently, there are several results related to this subject \cite{Shi19,Yong19}.  Similar to the UMEBs, the cardinality of most of the  known SUEBk's are multiple of $k$. 

 Therefore, it is interesting to ask whether there are SUEBks with continuous  integer cardinality or not. Inspired by the technique used in \cite{Li19}, we try to study  this question in this paper.

The remaining of this article is organized as follows.   In Sec. \ref{second}, we first introduce the  concept of special unextendible entangled basis and its equivalent form in matrix settings.  In Sec. \ref{third},  we present our main idea to construct the SUEBk. In Sec. \ref{fourth} and \ref{fifth},    based on the combinatoric concept: weighing matrices, we give two kinds of constructions of SUEBk whose cardinality varying in a  consecutive integer set.     Finally, we draw the conclusions and put forward some interesting questions in the last section.

\section{ Preliminaries}\label{second}
Let $[n]$ denote the set $\{1,2,\cdots, n\}$.   Let $\mathcal{H}_A$, $\mathcal{H}_B$ be  Hilbert spaces of dimension $d$ and $d'$ respectively. It is well known that any bipartite pure state in $\mathbb{C}^{d}\otimes \mathbb{C}^{d'}$ has a Schmidt decomposition. That is, any unit vector  $|\phi\rangle$ in $\mathbb{C}^{d}\otimes \mathbb{C}^{d'}$ can be written as
$$|\phi\rangle=\sum_{i=1}^k \lambda_i |e_i\rangle_A|e_i\rangle_B,     \  \ \sum_{i=1}^k \lambda_i^2=1$$
where $\lambda_i>0$ and $\{|e_i\rangle_A\}_{i=1}^k$ ($\{|e_i\rangle_B\}_{i=1}^k$) are orthonormal states of system $A$ (resp. $B$). The number $k$ is known as the Schmidt number of $|\phi\rangle $ and  we  denote it by $S_r(\phi)$. The set $\Lambda(|\phi\rangle):=\{\lambda_i\}_{i=1}^k$  is called the nonzero Schmidt coefficients of $|\phi\rangle.$ If all these $\lambda_i$s are equal to $1/\sqrt{k}$, we call $|\phi\rangle$ a special entangled state  of type $k$ ($2\leq k\leq d$). And we denote the set of all the special entangled states  of type $k$ to be SESk.  One notice that SESk is exactly the set of maximally entangled states  in $\mathbb{C}^d\otimes \mathbb{C}^{d'}$ when $k=\min\{d,d'\}.$

\begin{definition}[See \cite{Guo152}]\label{SUEntangledbasis}
 A set of states $\{|\phi_i\rangle\}_{i=1}^{n}$  ($1\leq n\leq dd'-1$) in $\mathbb{C}^d\otimes \mathbb{C}^{d'}$ is called a special unextendible entangled basis of type  $k$ \emph{(SUEBk)} if
 \begin{enumerate}[(1)]
 \item  $\langle \phi_i|\phi_j\rangle=\delta_{ij}, i,j\in [n];$
 \item   $|\phi_i\rangle\in$  SESk for all $i\in [n]$;
   \item  If $\langle\phi_i|\phi\rangle=0$ for all $i\in [n]$, then $|\phi\rangle \notin$  SESk.
 \end{enumerate}
  \end{definition}
The concept SUEBk generalizes the UPB  ($k=1$) and the UMEB  ($k=d$). In order to study SUEBk, it is useful to consider its matrix form.
Let $|\phi\rangle$ be  a  pure quantum states  in $ \mathcal{H}_A\otimes \mathcal{H}_B$. Under the computational bases $\{|i\rangle_A\}_{i=1}^{d}$ and $\{|j\rangle_B\}_{j=1}^{d'}$, it can be expressed as
$$|\phi\rangle=\sum_{i=1}^d\sum_{j=1}^{d'} m_{ij}^\phi |i\rangle_A|j\rangle_B.$$
We call the $d\times d'$ matrix $M_{\phi}:=(m_{ij}^\phi)$ the corresponding matrix representation of $|\phi\rangle.$ The correspondence is good in the following sense:
 \begin{enumerate}[(1)]
 \item  Inner product preserving: $$\langle \psi|\phi\rangle= \sum_{i=1}^{d}\sum_{j=1}^{d'}\overline{m_{ij}^\psi} m_{ij}^\phi=\text{Tr}(M_{\psi}^\dagger M_{\phi})=\langle M_{\psi}, M_{\phi}\rangle;$$
  \item  Schmidt number corresponding  to the matrix rank: $S_r(|\phi\rangle)=\text{rank}(M_{\phi});$
   \item   Nonzero Schmidt coefficients corresponding  to the nonzero singular values.
 \end{enumerate}

 With this correspondence, we can restate the concept  in definition   \ref{SUEntangledbasis} as follows.

\begin{definition}\label{SUEntangledbasis}
 A set of matrices $\{M_i\}_{i=1}^{n}$  ($1\leq n\leq dd'-1$) in $\text{Mat}_{d\times d'}(\mathbb{C})$ is called a special unextendible singular values basis with    nonzero  singular values being $\{1/\sqrt{k}\}$ \emph{(SUSVBk)} if
 \begin{enumerate}[(1)]
 \item  $\langle M_i,M_j\rangle=\delta_{ij}, i,j\in [n];$
 \item   The nonzero singular values of  $M_i$ are all equal to  $1/\sqrt{k}$  for each $i\in [n]$;
   \item  If $\langle M_i, M \rangle=0$ for all $i\in [n]$,  then some nonzero singular value of $M$ do not equal to $1/\sqrt{k}$.
 \end{enumerate}
  \end{definition}

  Due to the good   correspondence of the states and matrices,   $\{|\psi_i\rangle\}_{i=1}^n$ is a set of SUEBk in $\mathbb{C}^d\otimes \mathbb{C}^{d'}$ if and only if $\{M_{\psi_i}\}_{i=1}^n$  is a set of SUSVBk in $\text{Mat}_{d\times d'}(\mathbb{C})$. Therefore, in order to construct a set of $n$ members SUEBk in $\mathbb{C}^d\otimes \mathbb{C}^{d'}$, it is sufficient to   construct a set of $n$ members SUSVBk in $\text{Mat}_{d\times d'}(\mathbb{C})$.

\vskip 5pt

\section{strategy for constructing susvbk}\label{third}

{\bf \emph{Observation 1}}-- It is uneasy to calculate the singular values of an arbitrary matrix. However,  if there are only $k$ nonzero elements  in $M$ (say $m_{i_1,j_1},\cdots,m_{i_k,j_k} $) and these elements happen to be in different rows and columns, then there are exactly $k$ nonzero singular values of $M$ and they are just $|m_{i_1,j_1}|, \cdots, |m_{i_k,j_k}|$.
For example, let $M$ be
$$
\left[\begin{array}{cccccc}
\frac{1}{\sqrt{2}} & 0 & 0& 0& 0&0\\
0 &0 &0&\frac{\sqrt{-1}}{\sqrt{3}}&0&0\\
0 &0 &\frac{1}{\sqrt{12}}&0&0&0\\
0 &\frac{w}{\sqrt{24}}&0&0&0&0\\
0 &0 &0&0&\frac{1}{\sqrt{24}}&0\\
0 &0 &0&0&0&0\\
0 &0 &0&0&0&0
\end{array}
\right]
$$
where $w=e^{2\pi \sqrt{-1}/3}$. Then the nonzero singular values of $M$ are $\frac{1}{\sqrt{2}}$,  $\frac{1}{\sqrt{3}}$,
$\frac{1}{\sqrt{12}}$, $\frac{1}{\sqrt{24}}$, $\frac{1}{\sqrt{24}}$.

\vskip 5pt

{\bf \emph{Observation 2}}-- If there are exactly $k$ nonzero singular values of $M$, then the  rank of $M$ is $k$.  Therefore, if one can prove that $\text{rank}(M)<k$, then $M$ cannot be a matrix with $k$ nonzero singular values.

\vskip 5pt

With the two observations above, our strategy for constructing  an $n$-members SUSVBk can be roughly described by two steps. Firstly, we construct  a set of $n$-members of orthnormal matrices $\mathcal{M}:=\{M_i\}_{i=1}^n$ such that there are exactly $k$ nonzero elements  in $M_i$  whose modulos are all $1/\sqrt{k}$ and these elements happen to be in different rows and columns. Secondly, we need to show that the rank of  any matrix in the  complementary space
of $\mathcal{M}$ (define as $\mathcal{M}^\perp:=\{M\in\text{Mat}_{d\times d'}(\mathbb{C}) | \langle M_i, M\rangle= 0, \forall  M_i\in \mathcal{M}\} $) is less than $k$.

	


Let $d,d'$ be integers  such that $ 2 \leq d \leq d'$. We define the coordinate set to be
$$ \mathcal{C}_{d\times d'}:=\{(i,j)\in \mathbb{N}^2| i\in[d], j\in[d']\}.$$
Now we define an order   for the set $ \mathcal{C}_{d\times d'}$. Equivalently, we can define a bijection:
$$\begin{array}{crcl}
\mathcal{O}_{d\times d'}:& \mathcal{C}_{d\times d'}&\longrightarrow& [dd']\\[2mm]
            &  (i,j)&\longmapsto&\left\{
             \begin{array}{ll}
          (j-i)d+i &\text{ if } i\leq j;\\
           (d'+j-i)d+i &\text{ if } i>j.
             \end{array}\right\}
                       \end{array}
            $$
            Then we call ($\mathcal{C}_{d\times d'},\mathcal{O}_{d\times d'})$  an ordered set (See Fig. \ref{orderfig} for an example). We can also define an order $\mathcal{O}_{d\times d'}$ for the cases    $d'\leq d$ by   $\mathcal{O}_{d\times d'}:= \mathcal{O}_{d'\times d}$.

\begin{figure}[h]
		\includegraphics[width=0.45\textwidth,height=0.2\textwidth]{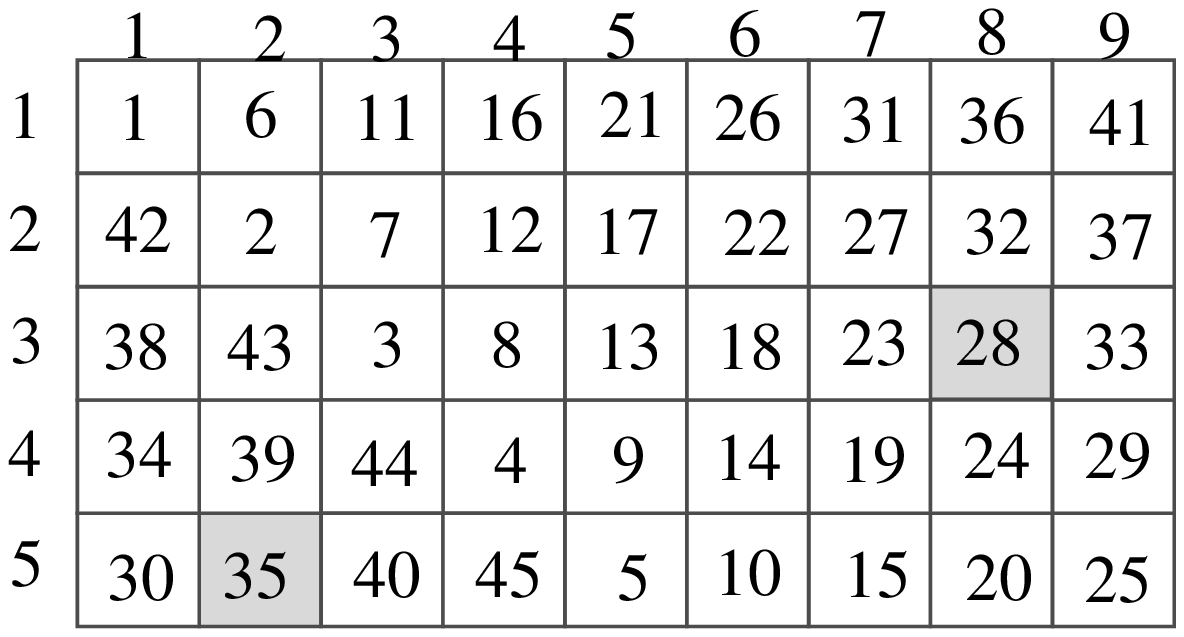}
		\caption{This is a picture of the order $\mathcal{O}_{5\times 9}$ on the coordinate set $\mathcal{C}_{5\times 9}$. For examples,  $\mathcal{O}_{5\times 9}[(3,8)]=(8-3)\times 5+3=28$, and  $\mathcal{O}_{5\times 9}[(5,2)]=(9+2-5)\times 5+5=35$. }\label{orderfig}
		\end{figure}

Let $(i_1,j_1), (i_2,j_2)$ be two different coordinates in $\mathcal{C}_{d\times d'} $. It is easy to check that if  $i_1=i_2$ or $j_1=j_2$, then $\big|\mathcal{O}_{d\times d'}[(i_1,j_1)]-\mathcal{O}_{d\times d'}[(i_2,j_2)]\big|\geq d-1$.   Therefore, any $d-1$  consecutive coordinates in $\mathcal{C}_{d\times d'}$ under the order $\mathcal{O}_{d\times d'}$ is  coordinately different. That is, these  $d-1$ coordinates must come from different rows and  different columns.

Let $P\subseteq \mathcal{C }_{d\times d'}$. Then  $P$ inherit an order $\mathcal{O}$ from that of $\mathcal{C }_{d\times d'}$(An order here means a bijective map from $P$ to $[\#P]$ where   $\# P$ to denote the number of elements in the  set $P$).   In fact, as $\# \mathcal{O}_{d\times d'}(P)=\# P$, there is an unique map $\pi_P$ from the set $\mathcal{O}_{d\times d'}(P)$ to $[\#P]$ which preserve the order of the numbers. Then we define $\mathcal{O}:= \pi_P   \mathcal{O}_{d\times d'}|_{P} $ to be the order of $P$  inherit   from that of $\mathcal{C }_{d\times d'}$. For example, let $P:=\{(1,2),(4,3), (5,6)\}\subseteq  \mathcal{C }_{5\times 9}$.  Then the  $\pi_P$ from the set $\{\mathcal{O}_{5\times 9}[(1,2)]=6, \mathcal{O}_{5\times 9}[(4,3)]=44, \mathcal{O}_{5\times 9}[(5,6)]=10\}$ to $[3]=\{1,2,3\}$ is just defined by: $\pi_P(6)=1, \pi_P(10)=2, \pi_P(44)=3.$ Therefore,  the order $\mathcal{O}$ of $P$  inherited  from  that of $\mathcal{C }_{d\times d'}$ is exactly the map:  $\mathcal{O}[(1,2)]=1,\mathcal{O}[(5,6)]=2,\mathcal{O}[(3,4)]=3.$

\vskip 5pt
In order to step forward, we first state the following observation  which is helpful for determine the orthogonality of matrices.
Let $P\subseteq \mathcal{C }_{d\times d'}$ and denote $\mathcal{O}$ the order of $P$ inherit from the $\mathcal{O }_{d\times d'}$.
$l$ denotes the number of elements in $P$.  As we have defined an order for the set $\mathcal{C}_{d\times d'}$, it reduces an order to its subset $P$. For any   vector $v \in \mathbb{C}^{l}$, we define
$$M_{d\times d'}(P,v):= \sum_{(i,j)\in P} v_{\mathcal{O}[(i,j)]}E_{i,j}$$  
where $E_{i,j}$ denote the $d\times d'$ matrix whose $(i,j)$ coordinate is 1 and zero  elsewhere.
\vskip 5pt

\begin{lemma}\label{Orthogonal}  Let $P_1,P_2\subseteq \mathcal{C}_{d\times d'}$ be nonempty sets and $v,w$ be vectors of dimensions $\#P_1$ and $\#P_2$ respectively.  Then we have the following statements.
			\begin{enumerate}[(a)]
				\item   If $P_1\cap P_2=\emptyset$, then we have $$\langle M_{d\times d'}(P_1,v), M_{d\times d'}(P_2,w)\rangle=0.$$
				\item   If $P_1= P_2$ and $v,w$ are orthogonal to each other, then we also  have $$\langle  M_{d\times d'}(P_1,v),  M_{d\times d'}(P_2,w)\rangle=0.$$
			\end{enumerate}
\end{lemma}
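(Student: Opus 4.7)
The plan is to unravel both statements directly from the Frobenius inner product formula $\langle A,B\rangle = \mathrm{Tr}(A^\dagger B) = \sum_{i,j}\overline{a_{ij}}\,b_{ij}$, together with the observation that the matrices $M_{d\times d'}(P,v)$ are built as supported on the coordinate set $P$ and nowhere else. In effect the construction $P\mapsto M_{d\times d'}(P,v)$ is just an isometric embedding of $\mathbb{C}^{\#P}$ into $\mathrm{Mat}_{d\times d'}(\mathbb{C})$ labelled by the subset $P$, and the two claims are the two standard orthogonality features of such an embedding.

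For part (a), I would first read off the matrix entries: the $(i,j)$ entry of $M_{d\times d'}(P_1,v)$ is $v_{\mathcal{O}[(i,j)]}$ when $(i,j)\in P_1$ and $0$ otherwise, and analogously for $M_{d\times d'}(P_2,w)$. Hence in the sum $\sum_{i,j}\overline{[M_{d\times d'}(P_1,v)]_{ij}}\,[M_{d\times d'}(P_2,w)]_{ij}$ every term requires $(i,j)\in P_1\cap P_2$ to be nonzero; under the hypothesis $P_1\cap P_2=\emptyset$ every term vanishes, giving (a).

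For part (b), with $P_1=P_2=:P$ the same termwise expansion collapses to
$$\langle M_{d\times d'}(P,v),M_{d\times d'}(P,w)\rangle = \sum_{(i,j)\in P}\overline{v_{\mathcal{O}[(i,j)]}}\,w_{\mathcal{O}[(i,j)]}.$$
Since $\mathcal{O}\colon P\to[\#P]$ is a bijection by construction, I can reindex the sum by $t:=\mathcal{O}[(i,j)]$ to obtain $\sum_{t=1}^{\#P}\overline{v_t}\,w_t = \langle v,w\rangle$, which is $0$ by hypothesis.

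There is no real obstacle here; the lemma is essentially a bookkeeping statement and the only thing one must be careful about is that the definition of $\mathcal{O}$ as an inherited order really does yield a bijection $P\to[\#P]$, which has already been established in the paragraph preceding the lemma. The proof will therefore be short, consisting of writing out the entries of the two matrices, applying the Frobenius inner product formula, and then using either disjointness (case (a)) or the bijectivity of $\mathcal{O}$ together with the orthogonality of $v$ and $w$ (case (b)).
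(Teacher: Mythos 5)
Your proposal is correct and follows essentially the same route as the paper: both expand the Frobenius inner product termwise over the supports of the two matrices, use disjointness of $P_1$ and $P_2$ to kill every term in case (a), and use the bijectivity of the inherited order to reduce case (b) to $\langle v,w\rangle=0$. The paper merely phrases the same computation via $\mathrm{Tr}[E_{j,i}E_{k,l}]=\delta_{ik}\delta_{jl}$ rather than directly via matrix entries.
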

\begin{proof}
	Denote $\mathcal{O}_1$ and $\mathcal{O}_2$ the orders of $P_1$ and $P_2$ inherit from the $\mathcal{O }_{d\times d'}$ respectively. 	
		\begin{enumerate}[(a)]
			
				\item As 
				$$
				\begin{array}{ccl}
				M_{d\times d'}(P_1,v):&=& \sum_{(i,j)\in P_1} v_{\mathcal{O}_1[(i,j)]}E_{i,j}, \\ M_{d\times d'}(P_2,v):&=& \sum_{(k,l)\in P_2} w_{\mathcal{O}_2[(k,l)]}E_{k,l},
				\end{array}$$
				we have
$$\begin{array}{rl}
&\langle M_{d\times d'}(P_1,v), M_{d\times d'}(P_2,w)\rangle\\[1mm]
=&\text{Tr}[M_{d\times d'}(P_1,v)^\dagger M_{d\times d'}(P_2,w) ]\\[1mm]
=&\displaystyle\sum_{(i,j)\in P_1}\displaystyle\sum_{(k,l)\in P_2}  \overline{v_{\mathcal{O}_1[(i,j)]}}w_{\mathcal{O}_2[(k,l)]} \text{Tr}[E_{j,i}E_{k,l}]\\[1mm]
=&\displaystyle\sum_{(i,j)\in P_1}\displaystyle\sum_{(k,l)\in P_2}  \overline{v_{\mathcal{O}_1[(i,j)]}}w_{\mathcal{O}_2[(k,l)]} \delta_{ik}\delta_{jl}=0.
\end{array}
$$
The last equality holds as  the condition $P_1\cap P_2=\emptyset$ implies $\delta_{ik}\delta_{jl}=0$.
	\item For the second part, we have the following equalities: $$\begin{array}{rl}
&\langle M_{d\times d'}(P_1,v), M_{d\times d'}(P_1,w)\rangle\\[1mm]
=&\text{Tr}[M_{d\times d'}(P_1,v)^\dagger M_{d\times d'}(P_1,w) ]\\[1mm]
=&\displaystyle\sum_{(i,j)\in P_1}\displaystyle\sum_{(k,l)\in P_1}  \overline{v_{\mathcal{O}_1[(i,j)]}}w_{\mathcal{O}_1[(k,l)]} \text{Tr}[E_{j,i}E_{k,l}]\\[1mm]
=&\displaystyle\sum_{(i,j)\in P_1}\displaystyle\sum_{(k,l)\in P_1}  \overline{v_{\mathcal{O}_1[(i,j)]}}w_{\mathcal{O}_1[(k,l)]} \delta_{ik}\delta_{jl}\\[1mm]
=&\displaystyle\sum_{(i,j)\in P_1}   \overline{v_{\mathcal{O}_1[(i,j)]}}w_{\mathcal{O}_1[(i,j)]}=\langle v|w\rangle=0. \\[1mm]
\end{array}
$$	\end{enumerate}
\end{proof}

\section{{First Type  of suebk }}\label{fourth}

In the following, we try to construct a set of matrices $\mathcal{M}:=\{M_i\}_{i=1}^n$ which generates all the matrices of the form $T_1$. While its complementary space   $\mathcal{M}^\perp$ is the set of matrices of the form $T_2$.
{\small 
	$$
T_1=\left[\begin{array}{ccccccc}
*&*&\cdots  & *&*&\cdots &*\\
\vdots&\vdots&\vdots & \vdots& \vdots&\vdots&\vdots \\
*&*&\cdots  & *&*&\cdots &*\\
0&0&\cdots  & 0&*&\cdots &*\\
0&0&\cdots  & 0&0&\cdots&0 \\
\vdots&\vdots&\vdots & \vdots& \vdots&\vdots&\vdots \\
0&0&\cdots  & 0&0&\cdots&0
\end{array}
\right],\  T_2=\left[\begin{array}{ccccccc}
0&0&\cdots  & 0&0&\cdots &0\\
\vdots&\vdots&\vdots & \vdots& \vdots&\vdots&\vdots \\
0&0&\cdots  & 0&0&\cdots &0\\
*&*&\cdots  & *&0&\cdots &0\\
*&*&\cdots  & *&*&\cdots&* \\
\vdots&\vdots&\vdots & \vdots& \vdots&\vdots&\vdots \\
*&*&\cdots  & *&*&\cdots&* \\
\end{array}
\right].
$$
}

\begin{example}\label{example1}
	 There exists  a SUEB3 in $\mathbb{C}^7 \otimes \mathbb{C}^7 $ whose cardinality is $47$.
\end{example}
\noindent\emph{Proof.} As $47=7\times 7-2$,  we  define $\mathcal{B}_{47}$ to be the  set with 47 elements which can be obtained by    deleting    $\{ (7,1),(7,2)\}$   from $\mathcal{C}_{7\times 7}$.  We can define an order $\mathcal{O}$ for the set $\mathcal{B}_{47}$.  In fact, the $\mathcal{O}$ is chosen to be the order of $\mathcal{B}_{47}$ inherited from that of $\mathcal{C}_{7\times 7}$ (See the left figure of Fig. \ref{7times7Figure} for an intuitive view).  Any $5$   consecutive elements of $\mathcal{B}_{47}$ under the order   $\mathcal{O}$    come from different rows and columns.  Firstly, we have the following identity
\begin{equation}\label{Large_7times7_decom}
47=9\times 3+ 5\times 4.
\end{equation}
Since there are $47$ elements in the set $\mathcal{B}_{47}$, by the decomposition (\ref{Large_7times7_decom}), we can divide the set $\mathcal{B}_{47}$ into $(9+5)$ sets: $9$ sets (denote by $S_i,  1\leq i\leq 9$)  of cardinality $3$ and $5$ sets (denote by $L_j, 1\leq j\leq 5$)  of cardinality $4$. In fact, we can divide $\mathcal{B}_{47}$ into these $14$ sets through its order $\mathcal{O}$. That is,
$$\begin{array}{l}
S_{i}:=\{\mathcal{O}^{-1}[3(i-1)+x]\ \big| \   x=1,\cdots, 3\}, 1\leq i\leq 9,\\[2mm] 
L_{j}:=\{\mathcal{O}^{-1}[27+4(j-1)+y]\ \big| \   y=1,\cdots,4\},  1\leq j\leq 5.
\end{array}$$
See the right hand side of Fig. \ref{7times7Figure} for an intuitive view of the set $S_i,L_j$.
Set  
$$
\text{CH}_3=
\left[\begin{array}{lll}
1 & 1 & 1 \\
1 & w & w^2  \\
1 & w^2 & w 
\end{array}
\right],\ \ 
O_4=\left[\begin{array}{rrrr}
0 & 1 & 1 & 1 \\
1 & 0 & -1 & 1 \\
1 & 1 & 0 & -1 \\
1 & -1 & 1 & 0
\end{array}
\right] 
$$
where $w=e^{\frac{2\pi \sqrt{-1}}{3}}$. 
We can easily check that
$
\text{CH}_3 \text{CH}_3^\dagger=3I_3$ and $
O_4 O_4^\dagger= 3I_4.
$
Now set $v_x$  to be the $x$-th row of $\text{CH}_3$  ($x=1,2,3$) and $w_y$ to be the $y$-th row of $O_4$  ($y=1,2,3,4$). So $v_x\in \mathbb{C}^3$ and $v_y\in \mathbb{C}^4$. So we can construct the following $9\times 3+ 5\times 4=47$ matrices:
$$\begin{array}{c}
M_{7\times 7} (S_i, \frac{1}{\sqrt{3}}v_x), M_{7\times 7} (L_j,\frac{1}{\sqrt{3}}w_y),\\[2mm]
 \ 1\leq i\leq 9,1\leq x\leq 3, 1\leq j\leq 5,1\leq y\leq 4.
 \end{array}$$

Let $\mathcal{M}$ to be the set of the above matrices.  Note that  the elements of each $S_i$  or $L_j$  are coordinately different.  Hence by  {\bf Observation} 1, the states correspond  to the above $N$ matrices  belong to  SES3.
Since $\text{CH}_3 \text{CH}_3^\dagger=3I_3$,  $v_1,v_2,v_3$ are pairwise orthogonal. Similarly, as $O_4 O_4^\dagger =3 I_4$, $w_1,w_2,w_3,w_4$ are also pairwise orthogonal.  And the $s+t$ sets above are pairwise disjoint.
Therefore, by Lemma \ref{Orthogonal}, the  $47$ matrices above are pairwise orthogonal.  Set $V$ be the linear  space spaned by the matrices in $\mathcal{M}$. Each matrix in $\mathcal{B}_\perp:=\{E_{i,j}\in \text{Mat}_{7\times7}(\mathbb{C})| (i,j)\in \mathcal{C}_{7\times 7} \setminus \mathcal{B}_{47}\}$ is orthogonal to $V$.  And  the dimension of $\text{span}_\mathbb{C}(\mathcal{B}_\perp)$ is just $2$. Therefore, $V^\perp=\text{span}_\mathbb{C}(\mathcal{B}_\perp)$.  One should note that the rank of any nonzeto matrix in $\text{span}_\mathbb{C}(\mathcal{B}_\perp)$ is $1$.   Such a state cannot lie in SEB3.  Therefore, the set of states corresponding to the matrices $\mathcal{M}$  consists a SUEB3. \qed

\begin{figure}[h]
	\includegraphics[width=0.22\textwidth,height=0.19\textwidth]{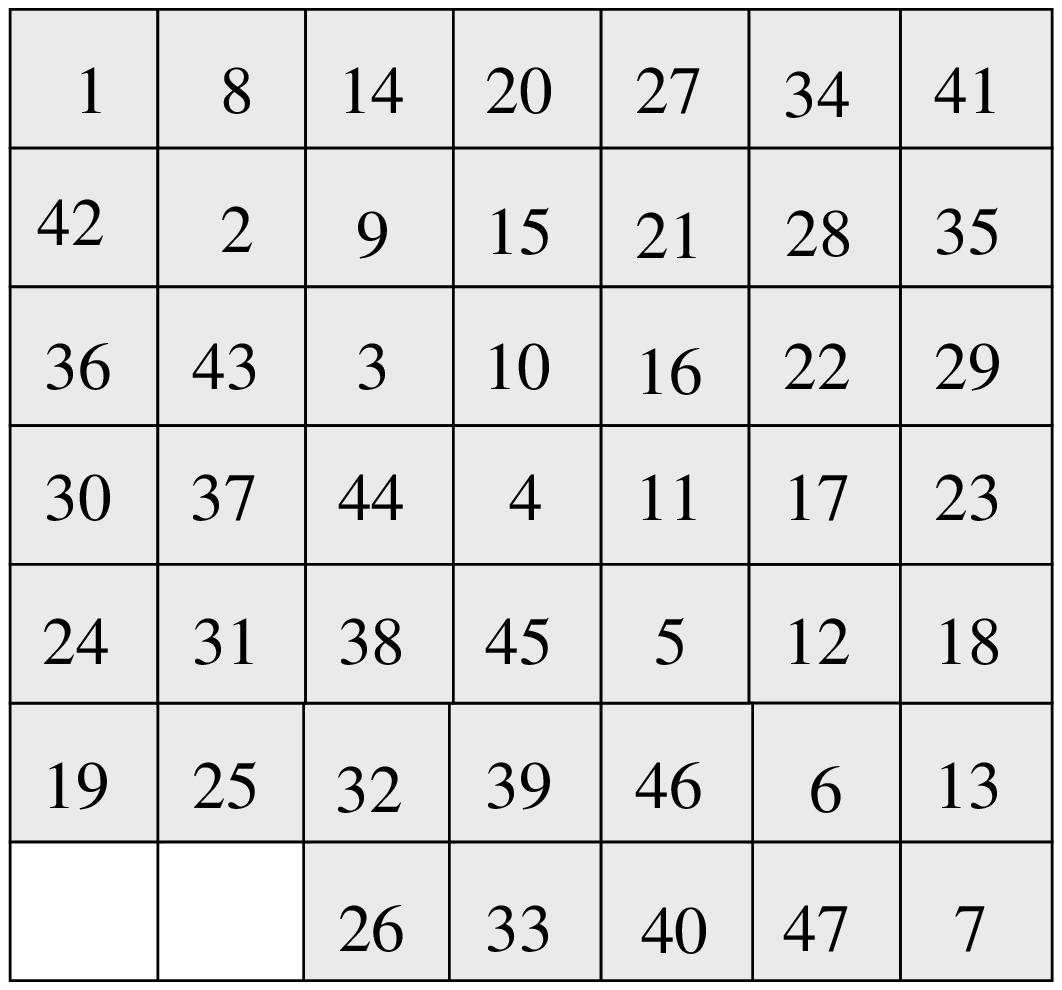}
	\includegraphics[width=0.24\textwidth,height=0.19\textwidth]{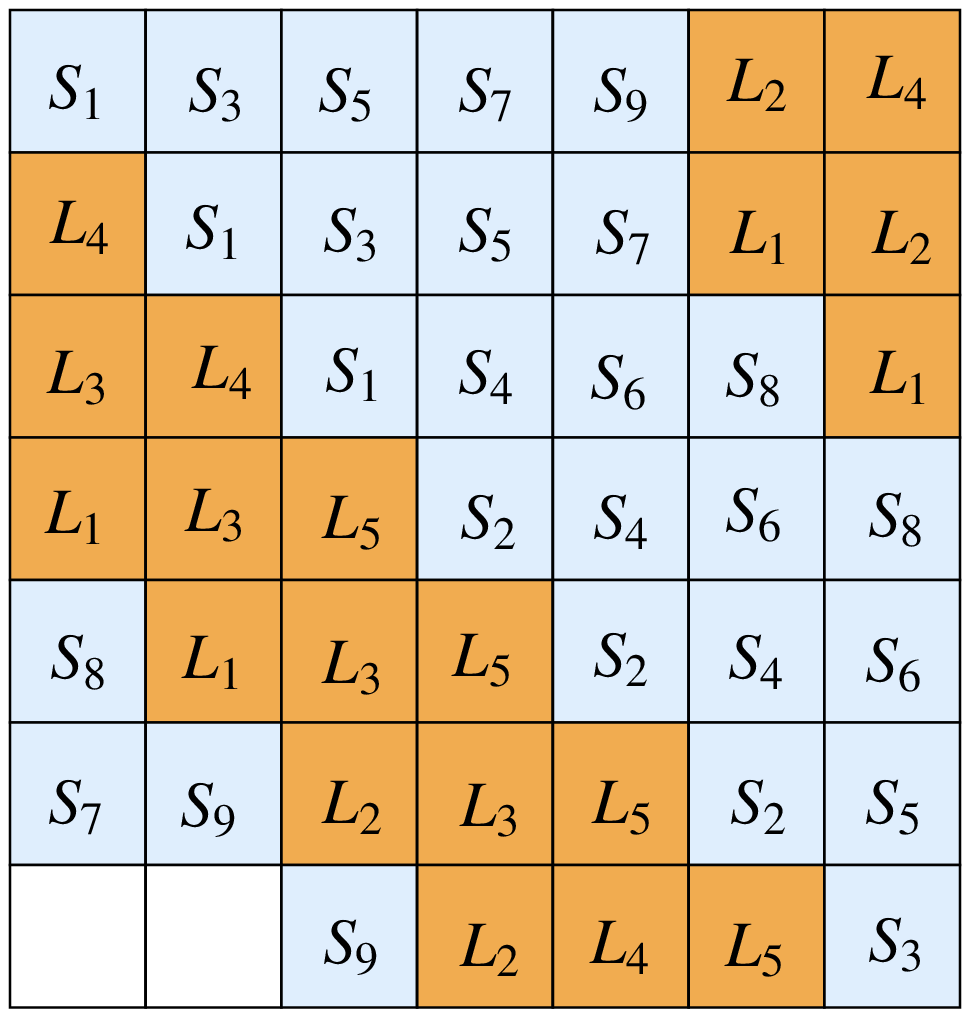}
	\caption{ The left figure shows the order of subset of $\mathcal{C}_{7\times 7}$. While the right hand one shows the distribution of the short and long states through this order. }\label{7times7Figure}
	
\end{figure}

\vskip 8pt

One can find that the $\text{CH}_3$ and $O_4$ play an important role in the proof of the   example \ref{example1}.   We give their generalizations by the following matrix and the weighing matrix in definition \ref{weigh_def}. 
There always exists some complex Hadamard matrix of order $d$. For example,
\begin{equation}\label{complex}
{\text{CH}}_d:=\left[
\begin{array}{lllll}
1 & 1 & 1 & \cdots & 1 \\[0.5mm]
1 & \omega_d  & \omega_d^2 & \cdots & \omega_d^{d-1} \\[0.5mm]
1 & \omega_d^2 & \omega_d^4 & \cdots  & \omega_d^{2(d-1)} \\[0.5mm]
\vdots & \vdots & \vdots & \ddots & \vdots \\
1 & \omega_d^{d-1} & \omega_d^{2(d-1)} & \cdots & \omega_d^{(d-1)^2}
\end{array}
\right],  
\end{equation}
 where $omega_d=e^{\frac{2\pi\sqrt{-1}}{d}}$. In fact, this is the Fourier $d$-dimensional matrix (discrete Fourier transform). The matrix $\text{CH}_d$ satisfies
\begin{equation}\label{weigh2}
\text{CH}_d \text{CH}_d^\dagger= dI_d.
\end{equation}

\begin{definition}[See \cite{Berman78}]\label{weigh_def}
	A generalized weighing matrix  is a square $a\times a$ matrix $A$  all of whose non-zero entries are $n$-th roots of unity such that $AA^\dagger=kI_a$. It follows that $\frac{1}{\sqrt{k}}A$ is a unitary matrix so that $A^\dagger A=kI_a$ and every row and column of $A$ has exactly $k$ nonzero entries. $k$ is called the weight and $n$ is called the order of $A$.   We denote $W(n,k,a)$ the set of all weight $k$ and order $a$ generalized weighing matrix whose nonzero entries being $n$-th root.
\end{definition}


One can find the following lemma via the theorem 2.1.1 on the book ``The Diophantine Frobenius Problem" \cite{Jorge06}. The related problem is also known as Frobenius coin problem  or coin problem.

\begin{lemma}[\cite{Jorge06}]\label{Frobenius}
	Let $a,b$ be positive integers and coprime. Then for every integer $N\geq (a-1)(b-1)$, there are non-negative integers $x,y$ such that $N=xa+yb$.
	
\end{lemma}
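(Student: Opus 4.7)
The plan is to use the standard modular arithmetic approach. Since $\gcd(a,b)=1$, the map $x \mapsto xa \bmod b$ is a bijection from $\{0,1,\ldots,b-1\}$ onto the residue classes modulo $b$. Thus for each integer $N$ there is a unique $x \in \{0,1,\ldots,b-1\}$ satisfying $xa \equiv N \pmod{b}$. Setting $y := (N - xa)/b$ automatically gives an integer decomposition $N = xa + yb$ with $x \geq 0$; the only remaining task is to verify $y \geq 0$ under the hypothesis $N \geq (a-1)(b-1)$.

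For this I would argue by contradiction. Assume $y < 0$. Since $y$ is an integer, $y \leq -1$, hence $N - xa \leq -b$, i.e.\ $N \leq xa - b$. Combining this with the bound $x \leq b-1$ yields
\[
N \leq (b-1)a - b = ab - a - b = (a-1)(b-1) - 1,
\]
which contradicts $N \geq (a-1)(b-1)$. Therefore $y \geq 0$, and $(x,y)$ is the desired pair of non-negative integers.

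The main obstacle, minor as it is, is obtaining the tight bound $(a-1)(b-1)$ rather than the coarser $(b-1)a$ that a naive positivity argument would yield. The crucial observation is that $y$ is an integer, so $y<0$ forces $y \leq -1$, and this single extra unit of slack (the extra ``$-b$'' in $N-xa \leq -b$) translates exactly into the sharpening from $(b-1)a$ down to $(a-1)(b-1)$. No deeper machinery is needed for the two-generator case; one could alternatively invoke the Sylvester--Frobenius formula for the genus of the numerical semigroup $\langle a,b\rangle$, but that would be overkill for this statement.
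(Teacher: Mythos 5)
Your proof is correct. Note that the paper does not actually prove this lemma: it imports it wholesale, citing Theorem 2.1.1 of Ramirez Alfonsin's book \emph{The Diophantine Frobenius Problem}, so there is no in-paper argument to compare against. Your argument is the standard self-contained one for the two-generator case: coprimality makes $x\mapsto xa \bmod b$ a bijection on residues, which pins down a unique $x\in\{0,\dots,b-1\}$ with $b\mid N-xa$, and the only content is the sign of $y=(N-xa)/b$. You handle the one delicate point correctly --- using that $y$ is an \emph{integer}, so $y<0$ forces $yb\le -b$ rather than merely $yb<0$, which is exactly what upgrades the crude bound $N\le (b-1)a$ to $N\le ab-a-b=(a-1)(b-1)-1$ and yields the contradiction at the stated threshold. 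The result is sharp ($N=ab-a-b$ is famously not representable), so no weaker argument would do. This is a perfectly adequate replacement for the citation, and for the purposes of this paper (where the lemma is only used to decompose cardinalities $N\ge qd'$ or $N\ge m_1m_2$ as $sa+tb$) nothing more general is needed.
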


\begin{theorem}\label{USEBk_weigh}
	Let $k$ be a positive integer. 	 Suppose there exist $a,b,m,n \in \mathbb{N}$ such that $W(m,k,a)$ and $W(n,k,b)$ are nonempty and $\gcd(a,b)=1$.  If $d,d'$ are integers such that $d\geq  \max\{a,b\}+k$ and  $d'\geq\max\{a,b\}+1,$ then for any integer $N\in [(d-k+1)d', dd'-1]$, there exists  a SUEBk in $\mathbb{C}^d\otimes \mathbb{C}^{d'} $ whose cardinality is exactly $N$.
\end{theorem}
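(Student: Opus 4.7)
The plan is to execute the scheme of Example~\ref{example1} in full generality. Given $N \in [(d-k+1)d', dd'-1]$, I would first invoke Lemma~\ref{Frobenius} to write $N = xa + yb$ with $x, y \in \mathbb{Z}_{\ge 0}$. The hypothesis $N \ge (a-1)(b-1)$ is easy to verify from
\[
N \ge (d-k+1)d' \ge (\max\{a,b\}+1)^2 > ab \ge (a-1)(b-1),
\]
using $d \ge \max\{a,b\}+k$ and $d' \ge \max\{a,b\}+1$. This decomposition dictates how the $N$ target matrices split into two flavors.

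Next I would pick any subset $F \subset \mathcal{C}_{d\times d'}$ of cardinality $dd'-N$ entirely contained in the last $k-1$ rows of the matrix; this is feasible because $dd'-N \le (k-1)d'$. Set $\mathcal{B}_N := \mathcal{C}_{d\times d'} \setminus F$ and endow it with the order $\mathcal{O}$ inherited from $\mathcal{O}_{d\times d'}$. I would then split $\mathcal{B}_N$, in this order, into $x$ consecutive blocks $S_1,\ldots,S_x$ of size $a$ followed by $y$ consecutive blocks $L_1,\ldots,L_y$ of size $b$. Fixing $A \in W(m,k,a)$ with rows $v_1,\ldots,v_a$ and $B \in W(n,k,b)$ with rows $w_1,\ldots,w_b$, the candidate family is
\[
\mathcal{M} := \Bigl\{\tfrac{1}{\sqrt k}\, M_{d\times d'}(S_i,v_r)\Bigr\}_{i,r} \cup \Bigl\{\tfrac{1}{\sqrt k}\, M_{d\times d'}(L_j,w_s)\Bigr\}_{j,s},
\]
a set of $xa+yb=N$ matrices. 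Orthogonality then follows from Lemma~\ref{Orthogonal}: across distinct blocks the supports are disjoint and part (a) applies; inside a block the relevant rows of the weighing matrix are orthogonal (since $AA^\dagger = kI_a$, respectively $BB^\dagger = kI_b$), and part (b) applies.

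The principal obstacle is showing that every block is \emph{coordinately different}, so that Observation~1 forces each member of $\mathcal{M}$ to have exactly $k$ nonzero singular values, all of modulus $1/\sqrt k$. I would argue as follows: any $d-1$ consecutive positions of $\mathcal{C}_{d\times d'}$ under $\mathcal{O}_{d\times d'}$ lie in pairwise distinct rows, so at most $k-1$ of them can belong to $F$ (which occupies only $k-1$ rows); consequently every window of $d-1$ original positions meets $\mathcal{B}_N$ in at least $d-k$ points. Starting such a window at the original index $o_t$ of any $p_t \in \mathcal{B}_N$ forces $p_{t},p_{t+1},\dots,p_{t+d-k-1}$ to satisfy $o_{t+d-k-1} \le o_t + d-2$, so these $d-k$ consecutive members of $\mathcal{B}_N$ came from an original window of length $\le d-1$ and are therefore coordinately different. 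Since $d \ge \max\{a,b\}+k$ rewrites as $\max\{a,b\} \le d-k$, every block of size $a$ or $b$ qualifies.

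Finally, I would rule out SESk matrices in $\mathcal{M}^\perp$. Because the $a$ rows of $A$ (respectively the $b$ rows of $B$) form an orthogonal basis of $\mathbb{C}^a$ (respectively $\mathbb{C}^b$), the family $\mathcal{M}$ spans $\mathrm{span}_{\mathbb{C}}\{E_{i,j} : (i,j) \in \mathcal{B}_N\}$, whose orthogonal complement in $\mathrm{Mat}_{d\times d'}(\mathbb{C})$ is precisely $\mathrm{span}_{\mathbb{C}}\{E_{i,j} : (i,j) \in F\}$. Every nonzero matrix in this complement is supported in the last $k-1$ rows, so its rank is at most $k-1 < k$; by Observation~2 it cannot represent an SESk state. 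The matrix/state dictionary then turns $\mathcal{M}$ into an SUEBk of cardinality $N$ in $\mathbb{C}^d \otimes \mathbb{C}^{d'}$.
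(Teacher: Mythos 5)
Your proposal is correct and follows essentially the same route as the paper: Frobenius decomposition $N=xa+yb$, partition of an ordered coordinate set into consecutive blocks weighted by rows of the two weighing matrices, Lemma~\ref{Orthogonal} for orthogonality, Observation~1 for membership in SESk, and a rank bound on the complement. The only (harmless) deviation is bookkeeping: the paper deletes an initial segment of row $q+1$ from the subgrid $\mathcal{C}_{(q+1)\times d'}$ and uses the order-gap $\geq q$ between same-row positions, whereas you delete an arbitrary set $F$ in the last $k-1$ rows of the full grid and verify coordinate-distinctness by counting how many of any $d-1$ consecutive positions $F$ can occupy—slightly more general, same substance.
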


\noindent\emph{Proof.} Without loss of generality, we suppose $a< b$ and  $A\in W(m,k,a)$, $B\in W(n,k,b)$. Let $A_1,\cdots, A_a$ be the rows of $A$ and $B_1,\cdots, B_b$ be the rows of $B$. Any integer $N\in [(d-k+1)d', dd'-1]$ can be  written  uniquely as $N=d' q+ r$ where $ (d-k+1)\leq q\leq d-1$ and $r$ is an integer with $0\leq r<d'$ . Then we have a coordinate set $\mathcal{C}_{(q+1)\times d'}$  with order $\mathcal{O}_{(q+1)\times d'}$. Notice that any $q$  consecutive elements  of $\mathcal{C}_{(q+1)\times d'}$ under the order  $\mathcal{O}_{(q+1)\times d'}$ are  coordinate different. Denote $\mathcal{B}_N$ to be the set by deleting the elements $\{ (q+1,i) | 1\leq i \leq d'-r\}$  from  $\mathcal{C}_{(q+1)\times d'}$.  The subset $\mathcal{B}_N$ inherit an order $\mathcal{O}$ from that of $\mathcal{C}_{(q+1)\times d'}$.   As $\big| \mathcal{O}_{(q+1)\times d'}[(q+1,i)]-\mathcal{O}_{(q+1)\times d'}[(q+1,j)]\big| \geq q$ for any $1\leq i\neq j\leq d'$, any $q-1$  consecutive elements  of $\mathcal{B}_N$ under the order  $\mathcal{O}$ are  coordinate different.   Since $ q-1\geq d-k \geq \max\{a,b\}$, any $a$ or $b$  consecutive elements  of $\mathcal{B}_N$  under the order   $\mathcal{O}$  come from different rows and columns. As $N\geq qd'>(a-1)\times (b-1)$, by Lemma \ref{Frobenius}, there exist nonnegative integers  $s,t$  such that
\begin{equation}\label{Large_dtimesd_decom}
N=s\times a+ t\times b.
\end{equation}
Since there are $N$ elements in the set $\mathcal{B}_{N}$, by the decomposition (\ref{Large_dtimesd_decom}), we can divide the set $\mathcal{B}_{N}$ into $(s+t)$ sets: $s$ sets (denote by $S_i,  1\leq i\leq s$)  of cardinality $a$ and $t$ sets (denote by $L_j, 1\leq j\leq t$)  of cardinality $b$. In fact, we can divide $\mathcal{B}_{N}$ into these $s+t$ sets through its order $\mathcal{O}$. That is,
$$\begin{array}{rl}
	S_{i}:=&\{\mathcal{O}^{-1}[(i-1){a}+x]\ \big| \   x=1,\cdots, a\}, 1\leq i\leq s,\\[2mm] 
	L_{j}:=&\{\mathcal{O}^{-1}[{s}a+(j-1)b+y]\ \big| \   y=1,\cdots, b\},  1\leq j\leq t.
	\end{array}$$

Then we can construct the following $s\times a+ t\times b=N$ matrices:
$$\begin{array}{c}
\{M_{d\times d'} (S_i, \frac{1}{\sqrt{k}}A_x), M_{d\times d'} (L_j,\frac{1}{\sqrt{k}}B_y) \\[2mm]
\ 1\leq i\leq s,1\leq x\leq a, 1\leq j\leq t,1\leq y\leq b.
\end{array}$$

Let $\mathcal{M}$ to be the set of the above matrices. Note that the $(s+t)$ sets $S_1,\cdots,S_s,L_1\cdots,L_t$ are pairwise disjoint. And the rows of $A$ (resp. $B$) are orthogonal to each other  as  $AA^\dagger= kI_a$ (resp. $BB^\dagger=kI_b$). By  Lemma \ref{Orthogonal}, the above $sa+tb$ matrices are orthogonal to each other. By construction, all the sets $S_1\dots,S_s,L_1,\cdots, L_t$ are  all coordinately different.  Using this fact and the  definition of generalized weighing matrices, the states corresponding to these matrices are all belong to SESk (see {\bf Observation} 1).  Set $V$ be the linear subspace of $\text{Mat}_{d\times d'}(\mathbb{C})$. Each matrix in $\mathcal{B}_\perp:=\{E_{i,j}\in\text{Mat}_{d\times d'}(\mathbb{C})| (i,j)\in \mathcal{C}_{d\times d'} \setminus \mathcal{B}_N\}$ is orthogonal to $V$.  And  the dimension of $\text{span}_\mathbb{C}(\mathcal{B}_\perp)$ is just $dd'-N$. Therefore, $V^\perp=\text{span}_\mathbb{C}(\mathcal{B}_\perp)$.  One should note that the rank of any matrix in $\text{span}_\mathbb{C}(\mathcal{B}_\perp)$ is less than $k$. That is to say, any state orthogonal to the states corresponding to $\mathcal{M}$ has Schmidt rank at most $(k-1)$. Such a state cannot lie in SEB($k-1$).  Therefore, the set of states corresponding to the matrices $\mathcal{M}$  consists a SUEB$k$. \qed

\vskip 5pt

Noticing that $\text{CH}_k\in W(k,k,k)$ for all integer $k \geq 2$. Therefore, by Theorem \ref{USEBk_weigh},  we arrive at the following corollary.

\begin{corollary}\label{USEBk_skew}
	Let $k$ be an integer such that $W(n,k,k+1)$ is nonempty for some integer $n$.    Then there exists some SUEBk with  numbers varying from $(d-k+1)d'$ to $dd'-1$  in $\mathbb{C}^{d}\otimes\mathbb{C}^{d'}$ whenever $d\geq 2k+1$ and    $d'\geq k+2$. 
\end{corollary}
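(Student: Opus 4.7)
The plan is to recognize this corollary as a direct instantiation of Theorem \ref{USEBk_weigh} with a well-chosen pair $(a,b)$. The theorem requires two generalized weighing matrices of weight $k$ with coprime orders $a$ and $b$; the corollary's hypothesis supplies one of them, and I need to identify a second ``free'' choice that makes the coprimality and dimension bounds match exactly.

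First I would point out that the Fourier-type matrix $\text{CH}_k$ displayed in equation (\ref{complex}) lies in $W(k,k,k)$: its entries are $k$-th roots of unity, and the identity $\text{CH}_k \text{CH}_k^\dagger = kI_k$ from (\ref{weigh2}) is exactly the weighing condition with $a=k$. So $W(k,k,k)$ is automatically nonempty for every $k\geq 2$. Combining this with the hypothesis that $W(n,k,k+1)$ is nonempty for some $n$, I can set $a=k$ and $b=k+1$ in Theorem \ref{USEBk_weigh}.

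The coprimality condition $\gcd(a,b)=1$ is immediate since consecutive integers are coprime. It remains only to check the dimension hypotheses. With this choice of $(a,b)$ we have $\max\{a,b\}=k+1$, so Theorem \ref{USEBk_weigh} requires
\[
d\geq \max\{a,b\}+k = (k+1)+k = 2k+1, \qquad d'\geq \max\{a,b\}+1 = k+2,
\]
which are precisely the hypotheses of the corollary. Theorem \ref{USEBk_weigh} then delivers a SUEBk of every cardinality $N\in[(d-k+1)d',\, dd'-1]$ in $\mathbb{C}^d\otimes\mathbb{C}^{d'}$, which is the desired conclusion.

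There is no real obstacle here; the whole content of the argument is the observation that $\text{CH}_k$ is a weighing matrix of order $k$ and that $k$ and $k+1$ are coprime, so the corollary is essentially a bookkeeping consequence of the theorem. If anything, the only step worth flagging is the verification that $\text{CH}_k \in W(k,k,k)$ (matching definition \ref{weigh_def}), but this is already implicit in the paper immediately before the corollary.
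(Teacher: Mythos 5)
Your proposal is correct and matches the paper's own route exactly: the paper likewise notes that $\text{CH}_k\in W(k,k,k)$ and invokes Theorem \ref{USEBk_weigh} with $a=k$, $b=k+1$ (coprime consecutive integers), so that $\max\{a,b\}+k=2k+1$ and $\max\{a,b\}+1=k+2$ reproduce the stated dimension bounds. Nothing further is needed.
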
 

We should notice that the weighing matrices have been  studied by lots of researchers \cite{Berman78,Koukouvinos97,Arasu10,Best13,Kotsireas12,Leung11,Schmidt13}. For example,  there always exist some weighing matrix of the form $W(n,p^m,p^m+1)$ whenever $p^m>2$ for all prime $p$. In fact,    Gerald Berman proved a much more  strong result than  this \cite{Berman78}.
 
\begin{corollary}
	Let $p$ be a prime and $k=p^m>2$ for some positive integer $m$.   Then there exists some SUEBk with  numbers varying from $(d-k+1)d'$ to $dd'-1$  in $\mathbb{C}^{d}\otimes\mathbb{C}^{d'}$ whenever $d\geq 2k+1$ and    $d'\geq k+2$. 
\end{corollary}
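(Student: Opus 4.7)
The plan is to deduce this corollary directly from the previous Corollary \ref{USEBk_skew}. That corollary says: whenever there exists an integer $n$ with $W(n,k,k+1)\neq\emptyset$, a SUEBk of every cardinality in $[(d-k+1)d',\,dd'-1]$ exists in $\mathbb{C}^{d}\otimes\mathbb{C}^{d'}$ under the dimension hypotheses $d\geq 2k+1$ and $d'\geq k+2$. Since the new statement makes exactly these two dimension hypotheses and asks for the same cardinality range, the entire task reduces to proving one existence assertion: for every prime $p$ and integer $m\geq 1$ with $k=p^m>2$, some generalized weighing matrix of weight $k$ and order $k+1$ exists (over some root-of-unity alphabet).

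The main (and only) obstacle is therefore this weighing-matrix existence. I would handle it by invoking the classical result of Berman \cite{Berman78} already cited in the paragraph preceding the corollary, which constructs weighing matrices $W(n,p^m,p^m+1)$ for all prime powers $p^m>2$. A brief sketch of why such a construction is natural: the $p^m+1$ points of the projective line $PG(1,\mathbb{F}_{p^m})$ together with the group action of a suitable subgroup of $PGL_2(\mathbb{F}_{p^m})$ furnish an incidence pattern with exactly one ``hole'' per row and column, and the nonzero entries can be filled with appropriate roots of unity (characters of the multiplicative group of $\mathbb{F}_{p^m}$) so that the orthogonality relation $AA^\dagger=(p^m)I_{p^m+1}$ holds. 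The precise value of $n$ does not matter for the application, only nonemptiness of $W(n,k,k+1)$.

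With this existence in hand, the proof is then a one-line application: pick any $A\in W(n,k,k+1)$, feed it into the role of $B$ in Theorem \ref{USEBk_weigh} together with $\text{CH}_k\in W(k,k,k)$ (noting $\gcd(k,k+1)=1$ automatically), and the conclusion of Corollary \ref{USEBk_skew} yields SUEBks of every cardinality $N\in [(d-k+1)d',\,dd'-1]$. No further verification is required, so the corollary follows immediately.
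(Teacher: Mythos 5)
Your proposal matches the paper's own route exactly: the paper derives this corollary by citing Berman's result that $W(n,p^m,p^m+1)$ is nonempty whenever $p^m>2$ and then applying Corollary \ref{USEBk_skew} (itself obtained from Theorem \ref{USEBk_weigh} with $\text{CH}_k\in W(k,k,k)$ and $\gcd(k,k+1)=1$). Your additional sketch of why Berman's construction works is not needed but does no harm; the argument is correct as stated.
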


\begin{corollary}
	Let $p_1,\cdots,p_s$ be different primes and $k=p_1^{m_1}\cdots p_s^{m_s}$ where    $m_1,\cdots,m_s$ are positive integers.  If $\gcd(p_i^{m_i}+1, k)=1$  for each $i=1,\cdots, s$,   Then there exists some SUEBk with  numbers varying from $(d-k+1)d'$ to $dd'-1$  in $\mathbb{C}^{d}\otimes\mathbb{C}^{d'}$ whenever $ d\geq k+\prod_{i=1}^s(p_i^{m_i}+1)$ and $d\geq 2k+1$ and   $d'\geq 2+ \prod_{i=1}^s(p_i^{m_i}+1)$.
\end{corollary}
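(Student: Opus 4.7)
The plan is to reduce the corollary to Theorem \ref{USEBk_weigh} by exhibiting two weighing matrices of weight $k$ whose sizes are coprime. Concretely, I would take $a=k$ and $b=\prod_{i=1}^{s}(p_i^{m_i}+1)$, so that all the dimensional hypotheses of the corollary are exactly the hypotheses of the theorem with this choice of $(a,b)$.

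First I would produce the two weighing matrices. For the size-$k$ one, the Fourier matrix $\mathrm{CH}_k$ defined in \eqref{complex} already satisfies $\mathrm{CH}_k\mathrm{CH}_k^\dagger=kI_k$, so $\mathrm{CH}_k\in W(k,k,k)$. For the other, I would invoke Berman's result (mentioned just before Corollary~2) which gives, for each prime power $p_i^{m_i}>2$, a weighing matrix $A_i\in W(n_i,\,p_i^{m_i},\,p_i^{m_i}+1)$ for some $n_i$. Then I would form the Kronecker product
\[
B:=A_1\otimes A_2\otimes\cdots\otimes A_s.
\]
A direct computation gives $BB^\dagger=(A_1A_1^\dagger)\otimes\cdots\otimes(A_sA_s^\dagger)=\bigl(\prod_i p_i^{m_i}\bigr)I=kI$, and every nonzero entry of $B$ is a root of unity (of order $\mathrm{lcm}(n_1,\dots,n_s)=:n$). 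Hence $B\in W\bigl(n,\,k,\,\prod_i(p_i^{m_i}+1)\bigr)$, so $W(n,k,b)$ is nonempty.

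Next I would verify the coprimality condition $\gcd(a,b)=\gcd\bigl(k,\prod_i(p_i^{m_i}+1)\bigr)=1$. Since $k=\prod_i p_i^{m_i}$, any prime divisor of $\gcd(k,b)$ is some $p_j$; but the hypothesis $\gcd(p_i^{m_i}+1,k)=1$ for every $i$ means no $p_j$ divides any factor $p_i^{m_i}+1$, hence none divides their product. Thus $\gcd(a,b)=1$.

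Finally I would check that the dimensional bounds match. Since $\prod_i(p_i^{m_i}+1)\geq k+1>k$, we have $\max\{a,b\}=b=\prod_i(p_i^{m_i}+1)$, and the hypotheses $d\geq k+\prod_i(p_i^{m_i}+1)$ and $d'\geq 2+\prod_i(p_i^{m_i}+1)$ give respectively $d\geq\max\{a,b\}+k$ and $d'\geq\max\{a,b\}+1$; the extra assumption $d\geq 2k+1$ is in fact automatic but harmless. Applying Theorem~\ref{USEBk_weigh} then yields, for every integer $N\in[(d-k+1)d',\,dd'-1]$, a SUEBk in $\mathbb{C}^d\otimes\mathbb{C}^{d'}$ of cardinality $N$, which is exactly the claim. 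The only step that requires real input beyond bookkeeping is the existence of the weighing matrix $A_i\in W(n_i,p_i^{m_i},p_i^{m_i}+1)$; everything else (the Kronecker product to boost the weight to $k$, and the gcd computation) is routine once the hypothesis $\gcd(p_i^{m_i}+1,k)=1$ is translated into prime factor language.
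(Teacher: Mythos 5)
Your proof is correct and matches the paper's intended derivation: the corollary is stated without an explicit proof, but it is meant to follow from Theorem~\ref{USEBk_weigh} exactly as you argue, taking $a=k$ via $\mathrm{CH}_k\in W(k,k,k)$ and $b=\prod_{i}(p_i^{m_i}+1)$ via the Kronecker product of Berman's matrices $W(n_i,p_i^{m_i},p_i^{m_i}+1)$, with the hypothesis $\gcd(p_i^{m_i}+1,k)=1$ supplying the coprimality of $a$ and $b$. Your bookkeeping of the dimension bounds (including the observation that $b>k$ makes $d\geq 2k+1$ redundant) is also accurate.
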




\section{{Second Type  of suebk}}\label{fifth}
In the following, we try to construct a set of matrices $\mathcal{M}:=\{M_i\}_{i=1}^n$ which generates all the matrices of the left of the following form. While its complementary space   $\mathcal{M}^\perp$ is the set of matrices of the right of the following form where $r+s<k$.
\begin{figure}[h]
	\includegraphics[width=0.24\textwidth,height=0.22\textwidth]{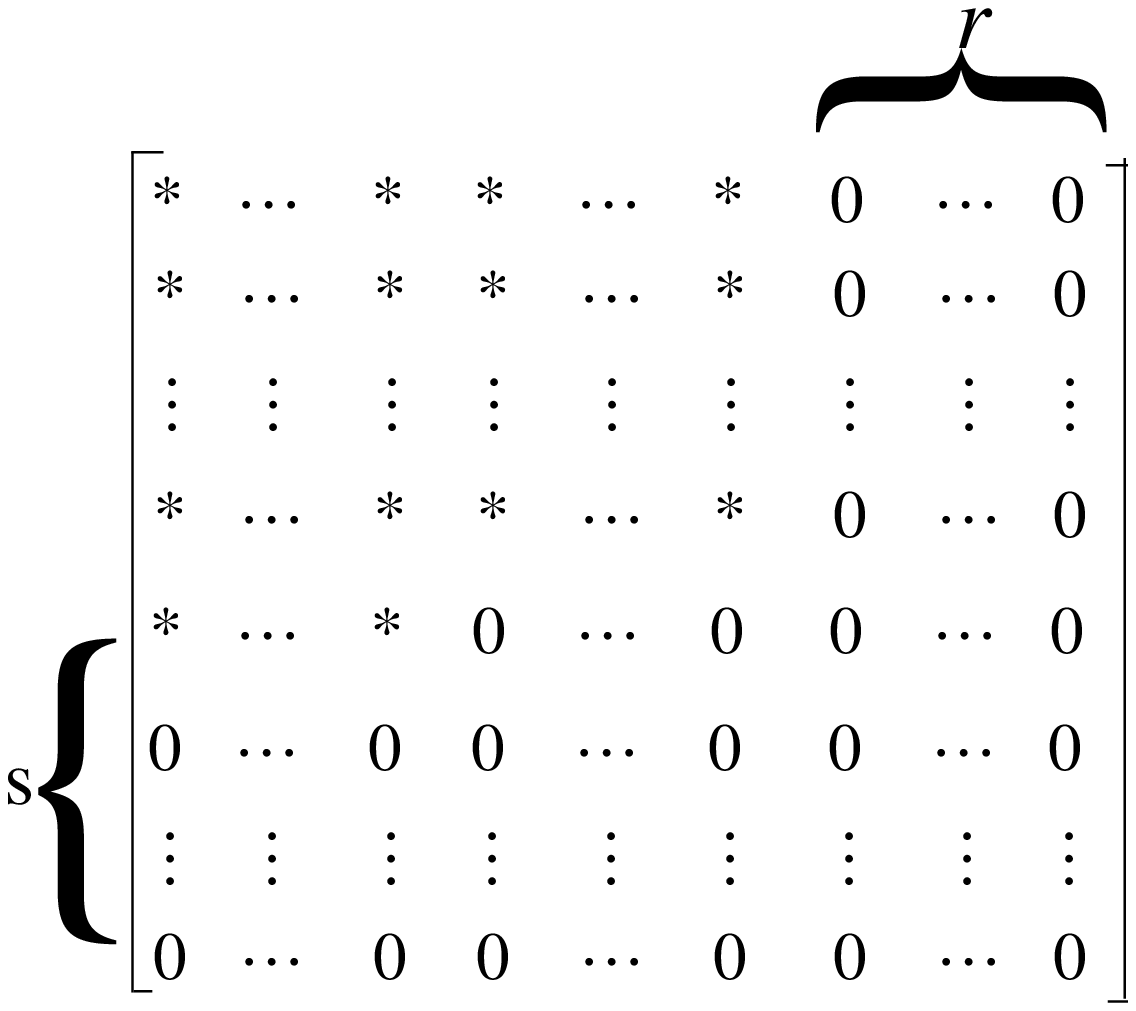}
	\includegraphics[width=0.235\textwidth,height=0.20\textwidth]{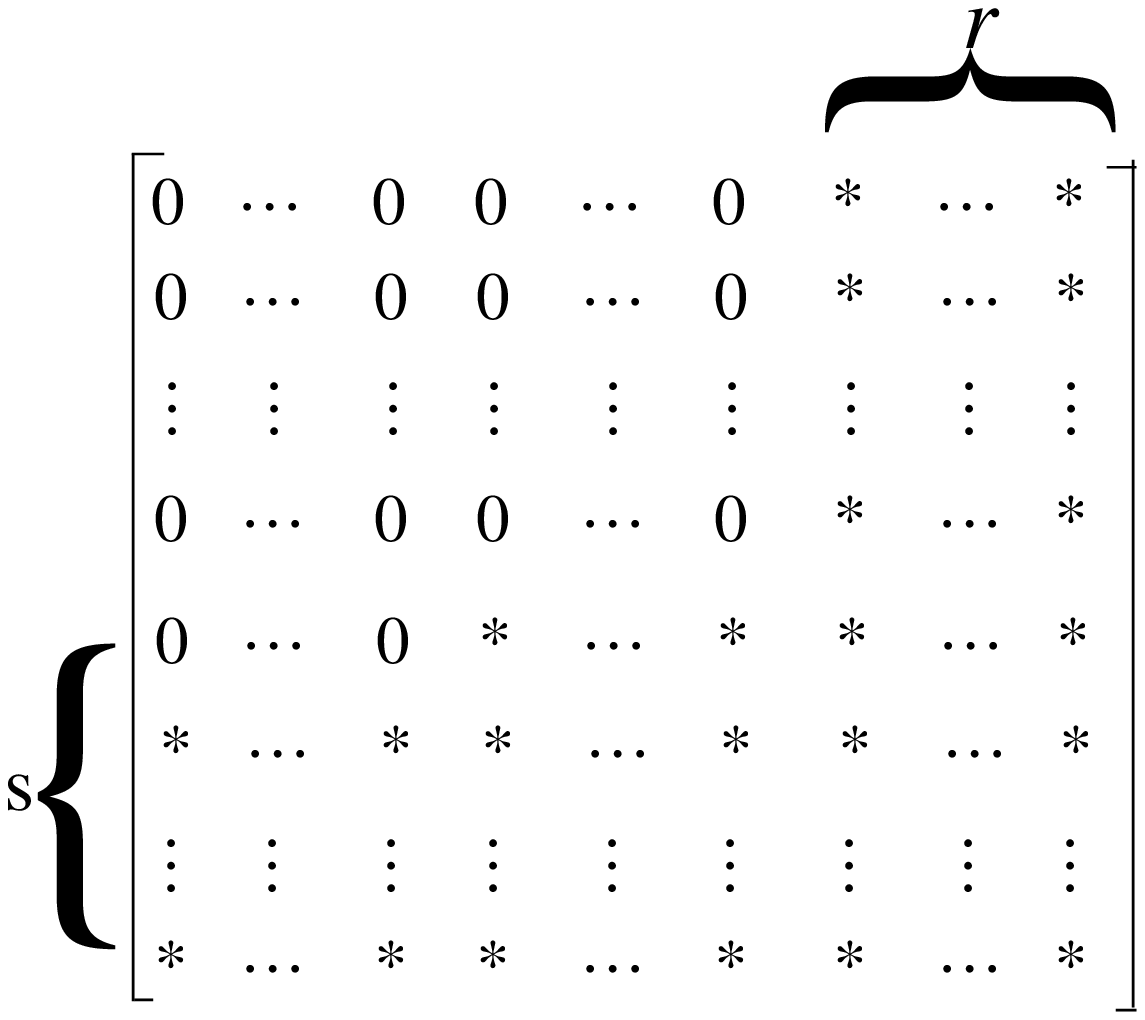}
	
\end{figure}

\begin{figure}[h]
	\includegraphics[width=0.23\textwidth,height=0.22\textwidth]{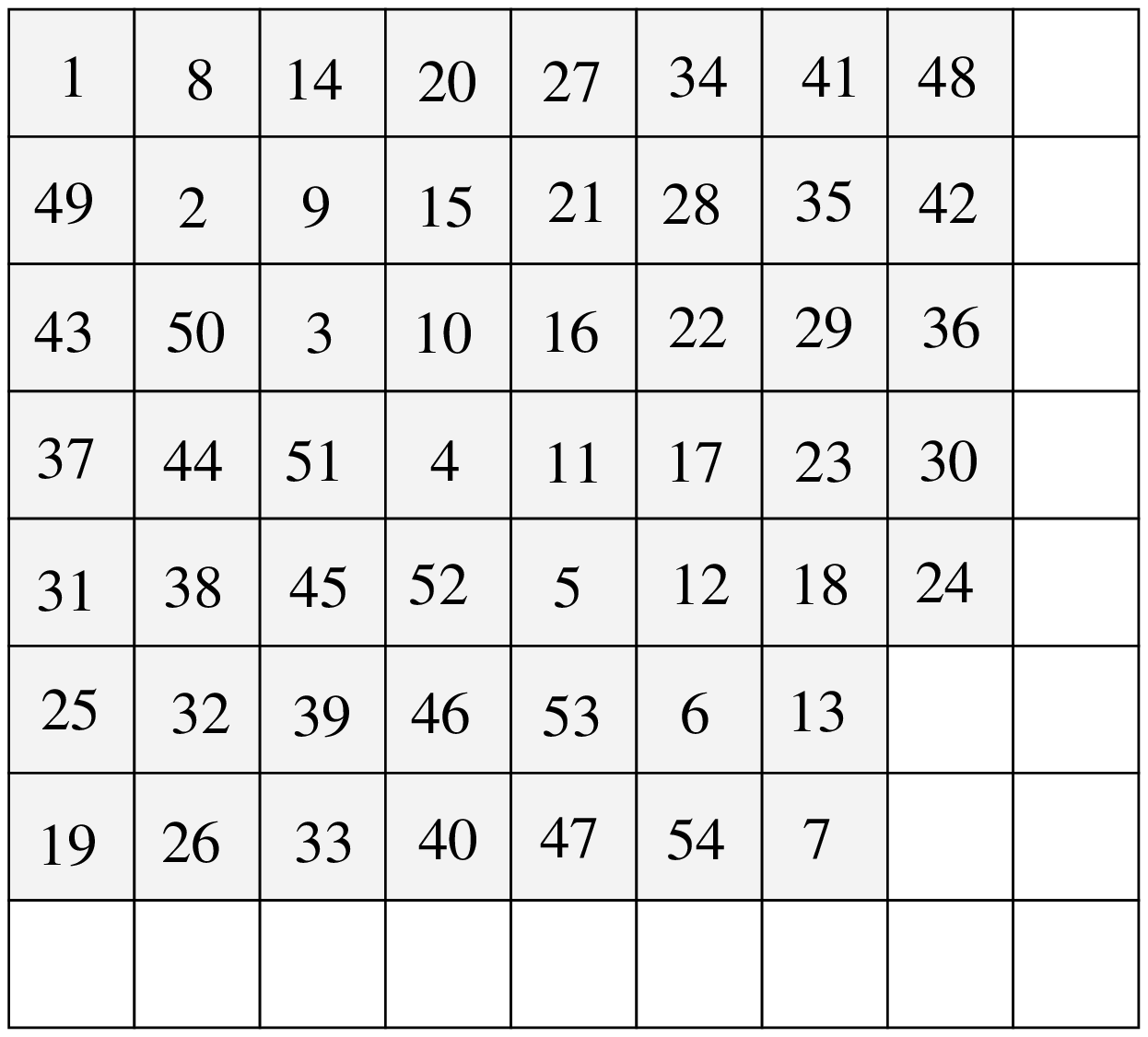}
	\includegraphics[width=0.23\textwidth,height=0.22\textwidth]{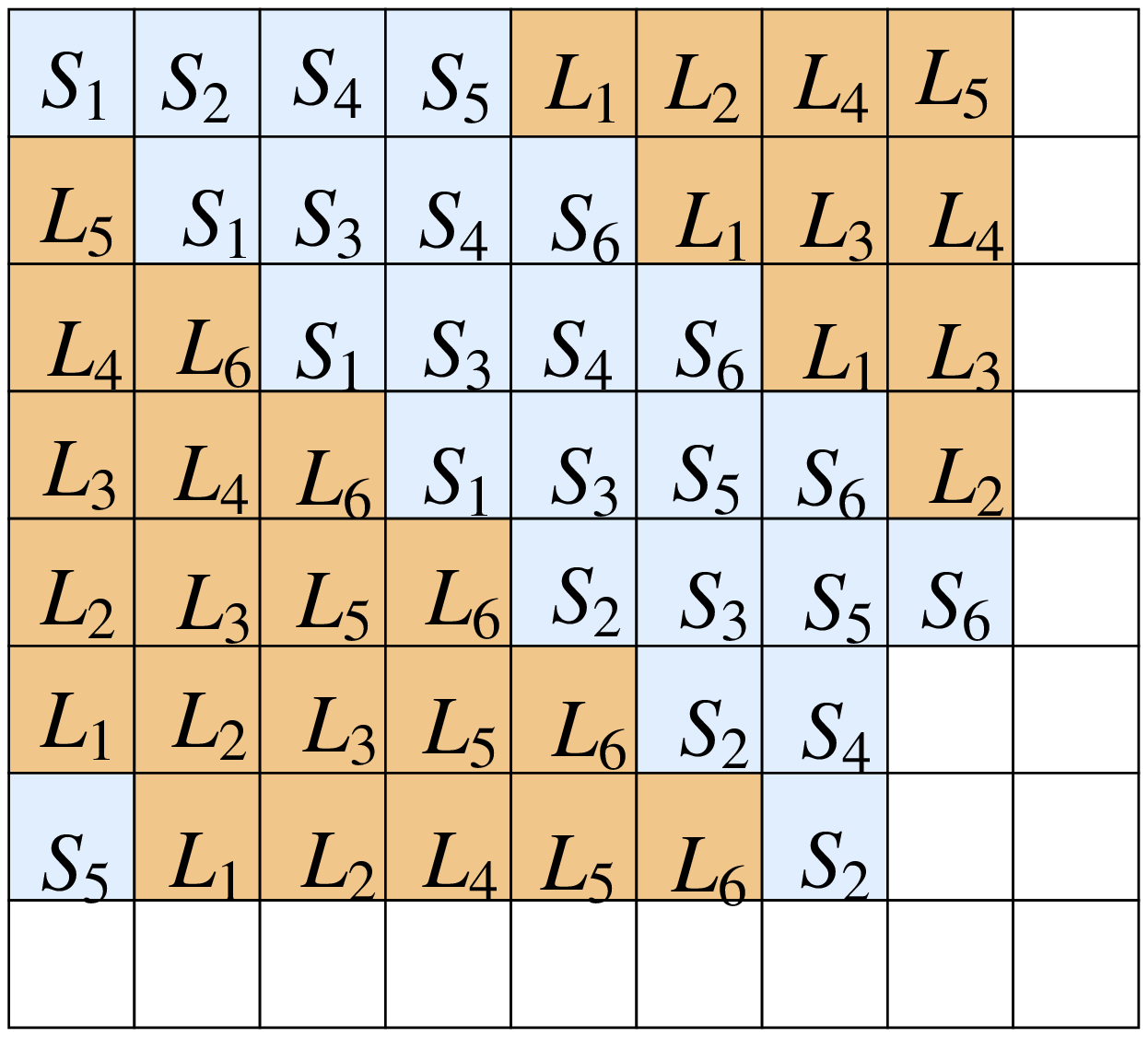}
	\caption{ The left figure shows the order of subset of $\mathcal{C}_{8\times 9}$. While the right hand one shows the distribution of the short and long states through this order. }\label{8times9Figure}
\end{figure}
\begin{example}\label{example2}
	 There exists  a SUEB4 in $\mathbb{C}^8\otimes \mathbb{C}^9 $ whose cardinality is $54$.
\end{example}

\noindent\emph{Proof.} As $54=7\times 8-2$,  we can define $\mathcal{B}_{54}$ to be the  set with 54 elements which can be obtained by    deleting    $\{ (6,8),(7,8)\}$   from $\mathcal{C}_{7\times 8}$.   Notice that any $6$ consecutive elements of $\mathcal{C}_{7\times 8}$ under the order   $\mathcal{O}_{7\times 8}$    come from different rows and columns.   Denote  $\mathcal{O}$ as the order of  $\mathcal{B}_{54}$ inherited from $\mathcal{O}_{7\times 8}$. As  $\mathcal{O}_{7\times 8}[(7,8)]=14,\mathcal{O}_{7\times 8}[(6,8)]=20$, any $5$ consecutive elements of $\mathcal{B}_{54}$ under the order   $\mathcal{O}$    come from different rows and columns (See the left figure of Fig. \ref{8times9Figure} for an intuitive view).   We have the following identity
\begin{equation}\label{Large_8times9_decom}
54=6\times 4+ 6\times 4.
\end{equation}
Since there are $54$ elements in the set $\mathcal{B}_{54}$, by the decomposition (\ref{Large_8times9_decom}), we can divide the set $\mathcal{B}_{54}$ into $(6+6)$ sets: $9$ sets (denote by $S_i,  1\leq i\leq 6$)  of cardinality $4$ and $6$ sets (denote by $L_j, 1\leq j\leq 6$)  of cardinality $6$. In fact, we can divide $\mathcal{B}_{54}$ into these $12$ sets through its order $\mathcal{O}$. That is,
$$\begin{array}{rl}
S_{i}:=&\{\mathcal{O}^{-1}[4(i-1)+x]\ \big| \ \ x=1,\cdots, 4\}, 1\leq i\leq 6,\\[2mm] 
L_{j}:=&\{\mathcal{O}^{-1}[24+5(j-1)+y]\ \big| \ \ y=1,\cdots,5\},  1\leq j\leq 6.
\end{array}$$
See the right hand side of Fig. \ref{8times9Figure} for an intuitive view of the set $S_i,L_j$.
Set  

$$O_5=\left[
\begin{array}{ccccc}
1&1&1&1&0\\[2mm]
1&w&w^2&0&1\\[2mm]
1&w^2&0&w&w^2\\[2mm]
1&0&w&w^2&w\\ [2mm]
0&1&w^2&w&w
\end{array}
\right], \ \ \text{ where } w=e^{2\pi \sqrt{-1}/3}.$$
We can easily check that
$
 O_5 O_5^\dagger= 4I_5.
$
Now set $v_x$ be the $x$-th row of $\text{CH}_4$  ($x=1,2,3,4$) and $w_y$ be the $y$-th row of $O_5$  ($y=1,2,3,4,5$). So $v_x\in \mathbb{C}^4$ and $v_y\in \mathbb{C}^5$. So we can construct the following $6\times 4+ 6\times 5=54$ matrices:
$$\begin{array}{c}
M_{8\times 9} (S_i, \frac{1}{\sqrt{3}}v_x), M_{8\times 9} (L_j,\frac{1}{\sqrt{3}}w_y), \\[2mm]
\ 1\leq i\leq 6,1\leq x\leq 4, 1\leq j\leq 6,1\leq y\leq 5.
\end{array}$$

Let $\mathcal{M}$ to be the set of the above matrices. Note that  the elements of each $S_i$  or $L_j$  are coordinately different.  Hence by  {\bf Observation} 1, the states correspond  to the above $N$ matrices  belong to  SES4.
Since $\text{CH}_4 \text{CH}_4^\dagger=4I_4$,  $v_1,v_2,v_3,v_4$ are pairwise orthogonal. Similarly, as $O_5 O_5^\dagger =4 I_5$, $w_1,w_2,w_3,w_4,w_5$ are also pairwise orthogonal.  And the $12$ sets above are pairwise disjoint.
Therefore, by Lemma \ref{Orthogonal}, the  $54$ matrices above are pairwise orthogonal.  Set $V$ be the linear subspace of $\text{Mat}_{8\times 9}(\mathbb{C})$. Each matrix in $\mathcal{B}_\perp:=\{E_{i,j}\in \text{Mat}_{8\times9}(\mathbb{C})| (i,j)\in \mathcal{C}_{8\times 9} \setminus \mathcal{B}_{54}\}$ is orthogonal to $V$.  And  the dimension of $\text{span}_\mathbb{C}(\mathcal{B}_\perp)$ is just $(72-54)$. Therefore, $V^\perp=\text{span}_\mathbb{C}(\mathcal{B}_\perp)$.  One should note that the rank of any matrix in $\text{span}_\mathbb{C}(\mathcal{B}_\perp)$ is less than $4$. That is to say, any state orthogonal to the states corresponding to $\mathcal{M}$ has Schmidt rank at most 3. Such a state cannot lie in SEB3.  Therefore, the set of states corresponding to the matrices $\mathcal{M}$  consists a SUEB4. \qed

\begin{theorem}\label{USEBk_weigh_2}
	Let $k$ be a positive integer. 	 Suppose there exist $a,b,m,n \in \mathbb{N}$ such that $W(m,k,a)$ and $W(n,k,b)$ are nonempty and $\gcd(a,b)=1$.  Let $d, d'$ be integers. If there are decompositions $d=m_1+s, d'=m_2+r$ such that $m_1,m_2\geq \max\{a,b\}+2$ and $1\leq r+s<k$. Then for any $N\in [m_1m_2, dd'-1]$, there exists  a SUEBk in $\mathbb{C}^d\otimes \mathbb{C}^{d'} $ whose cardinality is exactly $N$.
\end{theorem}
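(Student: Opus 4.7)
The plan is to imitate the proof of Theorem~\ref{USEBk_weigh}, but to shape the coordinate set $\mathcal{B}_N$ so that its complement $\mathcal{B}_\perp$ falls entirely in the ``border strip'' formed by the $s$ extra rows ($m_1{+}1,\ldots,d$) together with the $r$ extra columns ($m_2{+}1,\ldots,d'$). Fix $A\in W(m,k,a)$ and $B\in W(n,k,b)$ with rows $A_1,\ldots,A_a$ and $B_1,\ldots,B_b$. Given $N\in[m_1m_2,dd'-1]$, I first produce a subset $\mathcal{B}_N\subseteq\mathcal{C}_{d\times d'}$ with $|\mathcal{B}_N|=N$ enjoying two properties: (i) $\mathcal{B}_N$ contains the whole core $\{(i,j):1\le i\le m_1,\,1\le j\le m_2\}$, so that $\mathcal{B}_\perp$ lies in the border; and (ii) $\mathcal{B}_N$ sits inside some intermediate rectangle $\mathcal{C}_{m_1'\times m_2'}$ with $m_1\le m_1'\le d$, $m_2\le m_2'\le d'$, from which it inherits an order $\mathcal{O}$ such that every window of $\max\{a,b\}$ consecutive elements is coordinately different. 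I would obtain such a $\mathcal{B}_N$ by adjoining the $N-m_1m_2$ extra positions to the core one extra column (or extra row) at a time, so that the $m_1'm_2'-N$ deletions from the intermediate rectangle all lie in a single column and are therefore spaced $m_1'-1\ge\max\{a,b\}+1$ apart under $\mathcal{O}_{m_1'\times m_2'}$; any window of $\max\{a,b\}$ consecutive positions in $\mathcal{B}_N$ then spans at most $\max\{a,b\}+1\le m_1'-1$ consecutive positions of $\mathcal{O}_{m_1'\times m_2'}$, which are coordinately different by the basic property of this order.

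With $\mathcal{B}_N$ in hand, since $N\ge m_1m_2\ge(\max\{a,b\}+2)^2>(a-1)(b-1)$ and $\gcd(a,b)=1$, Lemma~\ref{Frobenius} supplies nonnegative integers $\alpha,\beta$ with $N=\alpha a+\beta b$. I would partition $\mathcal{B}_N$ under $\mathcal{O}$ into $\alpha$ consecutive windows $S_1,\ldots,S_\alpha$ of size $a$ followed by $\beta$ consecutive windows $L_1,\ldots,L_\beta$ of size $b$, and form
\begin{equation*}
\mathcal{M}=\{M_{d\times d'}(S_i,\tfrac{1}{\sqrt{k}}A_x):1\le i\le\alpha,\,1\le x\le a\}\cup\{M_{d\times d'}(L_j,\tfrac{1}{\sqrt{k}}B_y):1\le j\le\beta,\,1\le y\le b\},
\end{equation*}
a collection of $\alpha a+\beta b=N$ matrices. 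Because each window is coordinately different and each row of a weighing matrix has $k$ nonzero entries that are roots of unity, Observation~1 shows every element of $\mathcal{M}$ has exactly $k$ nonzero singular values, all equal to $1/\sqrt{k}$, and hence corresponds to a SESk state. Pairwise orthogonality then follows from Lemma~\ref{Orthogonal}: within a common window from $AA^\dagger=kI_a$ or $BB^\dagger=kI_b$ (part~(b)), and across distinct windows from disjointness of supports (part~(a)).

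To finish, let $V=\mathrm{span}_{\mathbb{C}}(\mathcal{M})$. Since $\mathcal{M}$ contains $N=|\mathcal{B}_N|$ orthonormal matrices all supported on $\mathcal{B}_N$, it is a basis of the $N$-dimensional subspace of matrices supported on $\mathcal{B}_N$, so $V^\perp=\mathrm{span}_{\mathbb{C}}\{E_{i,j}:(i,j)\in\mathcal{B}_\perp\}$. By property~(i), $\mathcal{B}_\perp$ lies in the border, so any $M\in V^\perp$ decomposes as $M_{\mathrm{row}}+M_{\mathrm{col}}$ with $M_{\mathrm{row}}$ supported in the $s$ extra rows (giving $\mathrm{rank}(M_{\mathrm{row}})\le s$) and $M_{\mathrm{col}}$ supported in the remaining $r$ extra columns (giving $\mathrm{rank}(M_{\mathrm{col}})\le r$). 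Hence $\mathrm{rank}(M)\le r+s<k$, and by Observation~2 no nonzero element of $V^\perp$ can have $k$ nonzero singular values; in particular no state orthogonal to the states of $\mathcal{M}$ lies in SESk, so $\mathcal{M}$ furnishes the required SUEBk of cardinality exactly $N$.

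The hard part is step~(ii): keeping every $\max\{a,b\}$-window coordinately different across the seam where the core meets the adjoined border columns. The hypothesis $m_1,m_2\ge\max\{a,b\}+2$ supplies exactly the two units of slack needed---one unit to guarantee that any $\max\{a,b\}+1$ consecutive positions of $\mathcal{O}_{m_1'\times m_2'}$ are already coordinately different, and one more unit so that a single deletion inside such a window does not spoil the property. Example~\ref{example2}, with $d=8$, $d'=9$, $k=4$, $m_1=m_2=7$, $r+s=3$, $a=4$, $b=5$, is exactly this pattern: the $5$ adjoined border positions fill the top of one extra column, the $2$ missing positions occupy the bottom of the same column, and the spacing $\mathcal{O}_{7\times 8}[(6,8)]-\mathcal{O}_{7\times 8}[(7,8)]=20-14=6$ between the two deletions is precisely what guarantees every window of $5=\max\{a,b\}$ consecutive entries of $\mathcal{B}_{54}$ to be coordinately different.
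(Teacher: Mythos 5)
Your proposal is correct and follows essentially the same route as the paper's own proof: grow the $m_1\times m_2$ core one extra row/column at a time to an intermediate rectangle, delete the leftover positions from the last added line so they are spaced at least $\max\{a,b\}+1$ apart in the inherited order, decompose $N=\alpha a+\beta b$ via Lemma~\ref{Frobenius}, fill consecutive windows with rows of the weighing matrices, and bound the rank of the complement by $r+s<k$. The only differences are cosmetic (the paper organizes the growth via an explicit partition of $[m_1m_2,dd')$ into $s+r$ intervals, adding rows first and then columns), so no further comparison is needed.
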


\noindent\emph{Proof.} Without loss of generality, we suppose $a< b$ and   $A\in W(m,k,a)$, $B\in W(n,k,b)$. Let $A_1,\cdots, A_a$ be the rows of $A$ and $B_1,\cdots, B_b$ be the rows of $B$. We separate the interval $[m_1m_2, d d' )$ into $s+t$ pairwise disjoint intervals: 
$$\begin{array}{cl}
&[(m_1+i)m_2,(m_1+i+1)m_2),\ \  0 \leq i\leq s-1,  \\ 
&[d(m_2+j),d(m_2+j+1)),\ \  0 \leq j\leq r-1. 
\end{array}
$$
 
 Any integer $N\in [m_1m_2, dd'-1]$ lies in one of the above $s+t$ intevals. Without loss of generality, we assume that $N\in [(m_1+i_0)m_2,(m_1+i_0+1)m_2)$ for some $i_0\in \{0,\cdots, s-1\} $.  Suppose $N=(m_1+i_0)m_2+f,$ with $0\leq f\leq m_2-1$.  Denote $\mathcal{B}_N$ to be the set by deleting the elements $\{ (m_1+i_0+1,i) | 1\leq i \leq m_2-f\}$  from  $\mathcal{C}_{(m_1+i_0+1)\times m_2}$. Then we have a coordinate set $\mathcal{C}_{(m_1+i_0+1)\times m_2}$  with order $\mathcal{O}_{(m_1+i_0+1)\times m_2}$. Notice that any $\max\{a,b\}$+1  consecutive elements  of $\mathcal{C}_{(m_1+i_0+1)\times m_2}$ under the order  $\mathcal{O}_{(m_1+i_0+1)\times m_2}$ are  coordinate different as $m_1,m_2\geq \max\{a,b\}+2$.   The subset $\mathcal{B}_N$ inherit an order $\mathcal{O}$ from that of $\mathcal{C}_{(m_1+i_0+1)\times m_2}$.     One can find that any  $a$ or $b$  consecutive elements  of $\mathcal{B}_N$  under the order   $\mathcal{O}$  come from different rows and columns. As $N\geq m_1m_2>(a-1)\times (b-1)$, by Lemma \ref{Frobenius}, there exist nonnegative integers  $s,t$  such that
\begin{equation}\label{Large_dtimesd_decom}
N=s\times a+ t\times b.
\end{equation}
Since there are $N$ elements in the set $\mathcal{B}_{N}$, by the decomposition (\ref{Large_dtimesd_decom}), we can divide the set $\mathcal{B}_{N}$ into $(s+t)$ sets: $s$ sets (denote by $S_i,  1\leq i\leq s$)  of cardinality $a$ and $t$ sets (denote by $L_j, 1\leq j\leq t$)  of cardinality $b$. In fact, we can divide $\mathcal{B}_{N}$ into these $s+t$ sets through its order $\mathcal{O}$. That is,
$$\begin{array}{rl}
S_{i}:=&\{\mathcal{O}^{-1}[(i-1){a}+x]\ \big| \   x=1,\cdots, a\}, 1\leq i\leq s,\\[2mm] 
L_{j}:=&\{\mathcal{O}^{-1}[{s}a+(j-1)b+y]\ \big| \  y=1,\cdots, b\},  1\leq j\leq t.
\end{array}$$

Then we can construct the following $s\times a+ t\times b=N$ matrices:
$$\begin{array}{c}
M_{d\times d'} (S_i, \frac{1}{\sqrt{k}}A_x), M_{d\times d'} (L_j,\frac{1}{\sqrt{k}}B_y) \\[2mm]
\ 1\leq i\leq s,1\leq x\leq a, 1\leq j\leq t,1\leq y\leq b.
\end{array}$$
Let $\mathcal{M}$ to be the set of the above matrices. Note that the $(s+t)$ sets $S_1,\cdots,S_s,L_1\cdots,L_t$ are pairwise disjoint. And the rows of $A$ (resp. $B$) are orthogonal to each other  as  $AA^\dagger= kI_a$ (resp. $BB^\dagger=kI_b$). By  Lemma \ref{Orthogonal}, the above $sa+tb$ matrices are orthogonal to each other. By construction, all the sets $S_1\dots,S_s,L_1,\cdots, L_t$ are  all coordinately different.  Using this fact and the  definition of generalized weighing matrices, the states corresponding to these matrices are all belong to SESk (see {\bf Observation} 1).  Set $V$ be the linear subspace of $\text{Mat}_{d\times d'}(\mathbb{C})$. Each matrix in $\mathcal{B}_\perp:=\{E_{i,j}\in\text{Mat}_{d\times d'}(\mathbb{C})| (i,j)\in \mathcal{C}_{d\times d'} \setminus \mathcal{B}_N\}$ is orthogonal to $V$.  And  the dimension of $\text{span}_\mathbb{C}(\mathcal{B}_\perp)$ is just $dd'-N$. Therefore, $V^\perp=\text{span}_\mathbb{C}(\mathcal{B}_\perp)$. As $r+s<k$, so the rank of any matrix in $\text{span}_\mathbb{C}(\mathcal{B}_\perp)$ is less than $k$. That is to say, any state orthogonal to the states corresponding to $\mathcal{M}$ has Schmidt rank at most $(k-1)$. Such a state cannot lie in SEB($k-1$).  Therefore, the set of states corresponding to the matrices $\mathcal{M}$  consists a SUEB$k$. \qed

\vskip 5pt

As application,  the   Theorem \ref{USEBk_weigh_2}  give us that there is some SUEB4 in  $\mathbb{C}^8\otimes \mathbb{C}^9 $ whose cardinality lies in $[49, 71]$,  where $a=4, b=5,$  $m_1=7,s=1, m_2=7,r=2.$

In fact, we may move further than the results showed in  Theorem \ref{USEBk_weigh} and  Theorem \ref{USEBk_weigh_2}. Here we present some examples (See Example \ref{example3}) which is beyond the scope of  Theorem \ref{USEBk_weigh} and  Theorem \ref{USEBk_weigh_2}.  But their proof  can be originated from the main idea of the two kinds  of  constructions of SUEBk.

\begin{example}\label{example3}
	For any integer $N\in [12,19]$, there exists  a SUEB3 in $\mathbb{C}^4 \otimes \mathbb{C}^5 $ whose cardinality is exactly $N$(See Fig. \ref{example4_5}). 
\end{example}

\begin{figure} 
	\includegraphics[width=0.5\textwidth,height=0.43\textwidth]{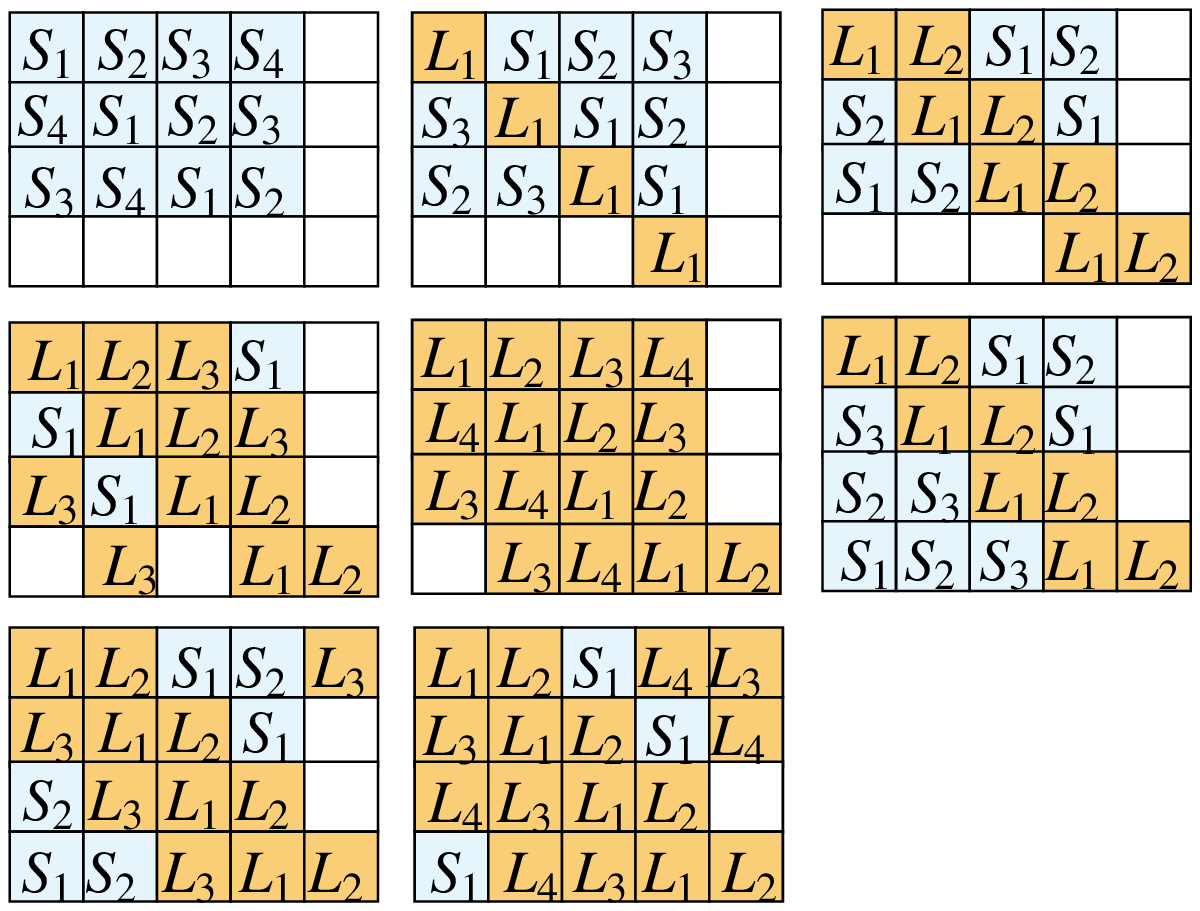}
	\caption{   This figure shows the distribution of the short states and long states for constructing SUEB3 in $\mathbb{C}^4\otimes\mathbb{C}^5$  with cardinality $N$ varying from $12$ to $19$.    }\label{example4_5}
\end{figure}

\section{conclusion and discussion}\label{sixth}
We present a method to construct the special unextendible entangle basis of type $k$.   The main idea here is to decompose the whole space into two subspaces such that the rank of one subspace easily bound by $k$ and the other can be generated by two kinds of the special entangled states of type $k$.  Here  the two kinds of  the special entangled states of type $k$ is related to a combinatoric concept which is known as  weighing matrices. This method is effective for $k=p^m\geq 3$. In these settings, we can obtain a series of SUEBk when the local dimensions are large.   In fact, based on two kinds of subspaces whose rank can be easily upper bounded by $k$, we give two types of constructions of the SUEBks.

However,  there are lots of unsolved cases.  Find out the largest linear subspace such that it do not contain any scpecial entangled states of type $k$. This is related to determine the minimal cardinality of possible SUEBk.  It is much more interesting to find some other methods that can solve the general existence of SUEBk.
\vskip 5pt

	\noindent{\bf Acknowledgments}\, \,  The author thank Mao-Sheng Li  for helpful discussion. This work is supported by the Research startup funds of DGUT
	with Grant No. GC300501-103.

\vskip 14pt

\bigskip

\vskip 20pt

\end{document}